\documentclass[pdflatex,sn-nature]{sn-jnl}

\usepackage{graphicx}%
\usepackage{multirow}%
\usepackage{amsmath,amssymb,amsfonts}%
\usepackage{amsthm}%
\usepackage{mathrsfs}%
\usepackage[title]{appendix}%
\usepackage{xcolor}%
\usepackage{textcomp}%
\usepackage{manyfoot}%
\usepackage{booktabs}%
\usepackage{algorithm}%
\usepackage{algorithmicx}%
\usepackage{algpseudocode}%
\usepackage{listings}%
\usepackage{array}

\theoremstyle{thmstyleone}%
\newtheorem{theorem}{Theorem}

\theoremstyle{thmstyletwo}%
\newtheorem{remark}{Remark}

\begin{document}

\title[Minimum Energy Cruise of All-Electric Aircraft with Applications to Advanced Air Mobility]{Minimum Energy Cruise of All-Electric Aircraft with Applications to Advanced Air Mobility}

\author*[1]{\fnm{Steven} \sur{Li}}\email{steven.li@concordia.ca}

\author[1]{\fnm{Luis} \sur{Rodrigues}}\email{luis.rodrigues@concordia.ca}

\affil*[1]{\orgdiv{Department of Electrical and Computer Engineering}, \orgname{Concordia University}, \city{Montréal}, \state{Québec}, \country{Canada}}

\abstract{Electrified propulsion is expected to play an important role in the sustainable development of Advanced Air Mobility (AAM). However, the limited energy density of batteries motivates the need to minimize energy consumption during flight. This paper studies the minimum total energy problem for an all-electric aircraft in steady cruise flight. The problem is formulated as an optimal control problem in which the cruise airspeed and final cruise time are optimization variables. The battery supply voltage is modeled as an affine function of the battery charge. Pontryagin's Minimum Principle is used to derive the necessary and sufficient conditions for optimality, from which closed-form expressions for the optimal cruise airspeed and optimal final cruise time are obtained. Additional analytical conditions are derived that determine when all-electric operation is feasible, one of which is that sufficient electric charge must be available. Numerical simulations based on the BETA Technologies CX300 all-electric aircraft and a representative AAM scenario illustrate how the aircraft weight, cruising altitude, electrical system efficiency, and initial battery charge influence the optimal airspeed and the feasibility of all-electric cruise.}

\keywords{advanced air mobility, electric aircraft, energy optimization, cruise flight}
\maketitle

\section{Introduction}
For the sustainable development of Advanced Air Mobility (AAM), replacing conventional fuel-based propulsion with electrical energy offers the potential to reduce emissions in aviation systems while enabling quieter operations. In many AAM applications, aircraft are expected to operate fully electric propulsion systems powered by onboard batteries \cite{goyal2022advanced,al2023advanced,bridgelall2024aircraft,raza2025advanced}. However, the amount of usable energy carried onboard the aircraft is limited by the battery capacity. Consequently, the efficient management of onboard electrical energy is a critical consideration in the operation of all-electric aircraft. Cruise flight accounts for a significant portion of total mission energy consumption, meaning that the choice of cruise operating conditions can substantially influence the feasibility of all-electric aircraft missions. This consideration is especially relevant for AAM operations, where aircraft are expected to perform frequent short-range missions in low-altitude airspace, thereby making energy-efficient cruise strategies important for mission planning and operational feasibility.

Several studies have investigated the performance characteristics of all-electric aircraft and the influence of operating conditions on mission energy consumption. 
Analytical expressions for estimating the range and endurance of battery-powered aircraft were derived in \cite{traub2011range}, followed by the derivation of the optimal battery weight fraction that maximizes range and endurance in \cite{traub2016optimal}. Methods for estimating energy consumption during climb and cruise phases were presented in \cite{ma2017method}, while the work in \cite{barufaldi2023optimal} investigated maximum range and endurance for all-electric aircraft with fixed-pitch propellers. The influence of cruise conditions and operational uncertainties on battery state-of-charge was examined using an optimal control framework in \cite{pradeep2021parametric}, demonstrating that predicted battery usage depends significantly on the chosen battery model. The selection of range-optimal speeds under wind conditions while accounting for propulsion and battery dynamics was investigated in \cite{spark2024optimizing}. In addition, the feasibility of all-electric aircraft operations under extreme environmental conditions was analyzed in \cite{adu2026system} using an integrated aircraft energy model.

Another line of research has focused on the design of all-electric aircraft and its subsystems, including batteries, propulsion systems, and airframe configurations. For instance, conceptual design requirements for small electric aircraft were investigated in \cite{patterson2014conceptual}, while the weight of the electrical propulsion system was  minimized through the analysis of different power distribution architectures in \cite{ebersberger2023power}. Reviews of emerging technologies such as cryogenic power systems for electric aircraft have also been presented in \cite{elwakeel2025review}. Conceptual feasibility studies of large passenger all-electric aircraft have been conducted in \cite{zilliac2024feasibility}, and the performance of different electric motor configurations for electric propulsion systems has been analyzed in \cite{chen2025comparison}. 

Despite the growing body of work on electric aircraft design and performance, relatively few studies have addressed the energy minimization problem for all-electric aircraft. Most of these studies focused on the related problem of minimizing direct operating costs (DOC). Analytical expressions for the cruise airspeed that minimize the DOC of an all-electric aircraft using a constant-voltage battery model were derived in \cite{kaptsov2017flight}. The minimum-energy problem for a multirotor electric aircraft was formulated as an optimal control problem and studied numerically in \cite{pradeep2022wind}. Similarly, the DOC minimization problem for all-electric aircraft was formulated as an optimal control problem using several battery models, including constant-voltage, empirical circuit, and electrochemical models, and solved numerically in \cite{wang2022energy}. Analytical solutions for the DOC minimization problem were later derived for both hybrid-electric and all-electric aircraft and applied to path planning problems in AAM contexts \cite{li2023optimal}. More recently, the DOC minimization problem was studied using a variable cost index formulation, from which expressions for the optimal cruise airspeed and final cruise time were derived by the authors of \cite{e2025unified}. Table \ref{tab:literature_comparison} presents a comparison of existing studies on energy optimization for all-electric aircraft. Only studies addressing at least one of the criteria listed in the table are included. As shown in Table \ref{tab:literature_comparison}, prior work has either relied on numerical optimization approaches, did not address the feasibility of all-electric operation based on battery charge availability, or assumed simplified battery models that do not account for the dependence of the supply voltage on the battery charge. The present work addresses these limitations. The main contributions of this paper are as follows:
\begin{itemize}
    \item Closed-form expressions are derived for the optimal airspeed and the optimal final cruise time that minimize the total energy of an all-electric aircraft in steady cruise flight, while modeling the battery supply voltage as an affine function of the battery charge.
    \item Analytical feasibility conditions for all-electric operation are obtained that explicitly determine whether sufficient battery charge is available to complete the cruise segment.
\end{itemize}
\begin{table}[h]
\centering
\caption{Comparison of All-Electric Aircraft Energy Optimization Studies}
\label{tab:literature_comparison}
\begin{tabular}{c|>{\centering\arraybackslash}p{2cm}|>{\centering\arraybackslash}p{2cm}|>{\centering\arraybackslash}p{2cm}|>{\centering\arraybackslash}p{2cm}}
\hline
\textbf{Ref.} & \textbf{Analytical Solutions} & \textbf{Variable Airspeed} & \textbf{Charge Feasibility} & \textbf{Nonconstant Voltage} \\
\hline
\cite{kaptsov2017flight} & \textbf{Yes} & \textbf{Yes} & No & No \\
\cite{pradeep2022wind} & No & \textbf{Yes} & No & No \\
\cite{wang2022energy} & No & \textbf{Yes} & No & \textbf{Yes} \\
\cite{li2023optimal} & \textbf{Yes} & \textbf{Yes} & No & No \\
\cite{e2025unified} & \textbf{Yes} & \textbf{Yes} & No & No \\
Proposed & \textbf{Yes} & \textbf{Yes} & \textbf{Yes} & \textbf{Yes} \\
\hline
\end{tabular}
\end{table}
The remainder of this paper is organized as follows. Section II introduces the system model of the all-electric aircraft. Section III formulates the energy minimization problem in cruise and derives the feasibility conditions that determine whether sufficient battery charge is available. Section IV presents numerical simulations that illustrate how key system parameters influence the optimal cruise airspeed and the feasibility of all-electric operation, and discusses the results in the context of AAM.

\section{System Model}
This section describes the system model of an all-electric aircraft. The following assumptions are made for the system model:
\begin{enumerate}
    \item The aircraft is assumed to fly in steady cruise at constant altitude, and the atmospheric air density $\rho$ is constant. \label{ass:steady_cruise}
    \item Wind effects are assumed to be small, so they can be neglected. \label{ass:no_wind}
    \item The flight Mach number is assumed to remain below the drag divergence Mach number, thereby avoiding the sharp increase in the drag coefficient associated with shock waves \cite{Anderson_air_perf}. \label{ass:drag_polar}
    \item The aircraft is assumed to operate within its flight envelope. \label{ass:flight_envelope}
    \item \label{ass:constant_voltage}The supply voltage $U$ of the electric motors is assumed to remain within the nominal region of the battery discharge cycle (see Figure \ref{fig:battery_model}), and is therefore treated as an affine function of the battery charge during cruise. 
    \item The internal resistance of the battery is assumed to be small and independent of the supply current magnitude, with its effects incorporated into the system through a reduction in the electrical system efficiency $\eta$. \label{ass:small_resistance}
    \item The battery capacity is assumed to be independent of the discharge current (no Peukert effect \cite{doerffel2006critical}). \label{ass:no_peukert}
    \item The temperature of the batteries remains within an acceptable range, and thermal effects within the battery system are neglected. \label{ass:no_thermal}
\end{enumerate}
\begin{figure}
        \centering
        \includegraphics[scale=0.37]{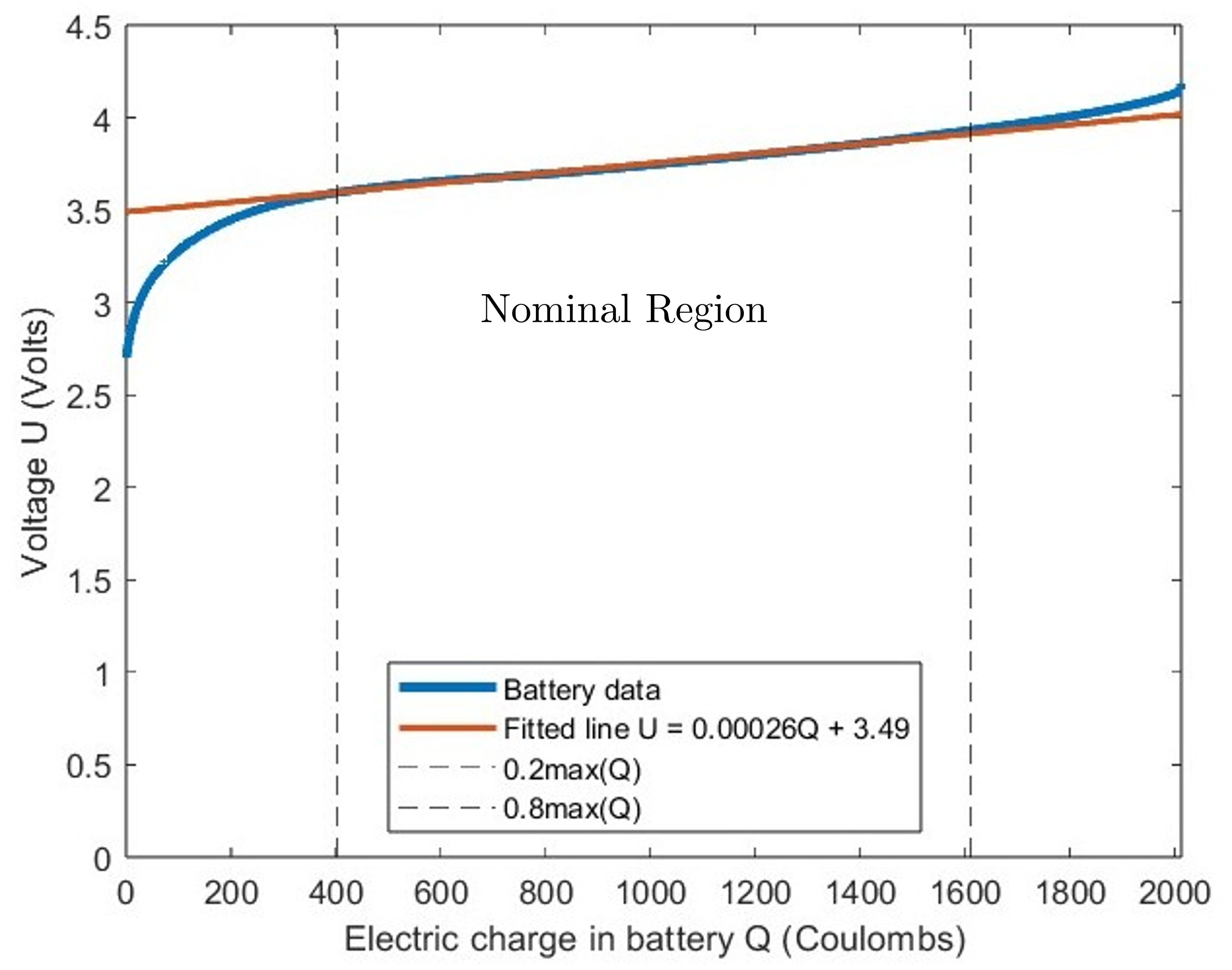}
        \caption{Example of a battery discharge cycle obtained from the Oxford Battery Degradation Dataset 1 \cite{howey2017oxford}.}
        \label{fig:battery_model}
    \end{figure}
\begin{remark}
    Regarding Assumption \ref{ass:constant_voltage}, it is standard practice to operate batteries within a restricted state-of-charge window, typically between $20\%$ and $80\%$ of the maximum capacity to mitigate battery degradation \cite{gao2018aging}.
\end{remark}
\begin{remark}
    Regarding Assumption \ref{ass:no_thermal}, modern aircraft battery packs are equipped with active thermal management systems that regulate battery cell temperature. Therefore, provided the system operates within its design envelope, temperature variations should remain sufficiently small to justify neglecting thermal effects within the battery system.
\end{remark}
\subsection{Flight Dynamics Model}
Let $x(t)$ denote the horizontal position of the aircraft and $v$ its airspeed. Given Assumption \ref{ass:no_wind}, the rate of distance traveled of the aircraft is
\begin{equation}\label{dotx}
    \dot{x}(t) = v
\end{equation}
and $v_{stall} < v < v_{max}$ with $v_{stall} > 0$ being the stall speed of the aircraft given by
\begin{equation}
    v_{stall} = \sqrt{\frac{2W}{\rho SC_{L,max}}}
\end{equation}
where $W$ is the magnitude of the aircraft weight, $\rho$ is the atmospheric air density, $S$ is the wing surface area, and $C_{L,max}$ is the maximum lift coefficient, and with 
\begin{equation}
    v_{max} = \min(v_{div}, v_{max,rtd})   
\end{equation}
where $v_{div}$ is the airspeed associated with the drag divergence Mach number, and $v_{max,rtd}$ is the rated maximum operating airspeed of the aircraft. In addition, following Assumption \ref{ass:steady_cruise},
\begin{equation}\label{TequalD}
    T = D
\end{equation}
\begin{equation}\label{LequalW}
    L = W
\end{equation}
where $T$, $D$, and $L$ denote the magnitudes of the thrust, drag, and lift forces, respectively. Under Assumption \ref{ass:drag_polar}, and using \eqref{LequalW}, the lift and drag coefficients are
\begin{equation}\label{CL}
    C_L = \frac{2L}{\rho Sv^2} = \frac{2W}{\rho Sv^2}
\end{equation}
\begin{equation}\label{CD}
    C_D = C_{D,0} + C_{D,2}C_L^2
\end{equation}
where $C_{D,0}$ and $C_{D,2}$ denote the profile and induced drag coefficients, respectively. The resulting magnitude of the drag force is therefore defined as
\begin{equation}\label{drag}
	D = \frac{1}{2}\rho Sv^2C_D = \frac{1}{2}C_{D,0}\rho Sv^2 + \frac{2C_{D,2}W^2}{\rho Sv^2}
\end{equation}
The following partial derivatives will be used later in the proofs of Theorem 1.
\begin{equation}\label{Dv}
    D_v = \frac{\partial D}{\partial v} = C_{D,0}\rho Sv - \frac{4C_{D,2}W^2}{\rho Sv^3}
\end{equation}
\begin{equation}\label{Dvv}
    D_{vv} = \frac{\partial^2 D}{\partial v^2} = C_{D,0}\rho S + \frac{12C_{D,2}W^2}{\rho Sv^4}
\end{equation}
The minimum-drag airspeed is obtained by solving $D_v = 0$ and is given by
\begin{equation}\label{vDmin}
    v_{D,min} = \sqrt{\frac{2W}{\rho S}\sqrt{\frac{C_{D,2}}{C_{D,0}}}}
\end{equation}
and is constant.

\subsection{Electricity Consumption Model}
Given Assumption \ref{ass:constant_voltage}, the supply voltage is defined as
\begin{equation}\label{UofQ}
    U(t) = aQ(t) + b
\end{equation}
where $a \geq 0$ and $b > 0$ are linear and affine constant coefficients, respectively, and $Q(t)$ denotes the electric charge stored in the batteries, constrained by $Q_{min} < Q(t) < Q_{max}$, with $Q_{min} = 0.2Q_{full}$, $Q_{max} = 0.8Q_{full}$, and $Q_{full}$ is the rated maximum charge of the battery system. The power required for electric propulsion must equal the available battery power (after accounting for losses in the conversion to mechanical energy), expressed as
\begin{equation}
    Tv = \eta U(t)i(t) = -\eta U(t)\dot{Q}(t)
\end{equation}
where $i(t)$ is the current drawn by the electric propulsion system and $\eta > 0$ is the electrical system efficiency. The negative sign indicates that the current is drawn from the battery system. Under Assumptions \ref{ass:steady_cruise} and \ref{ass:small_resistance} to \ref{ass:no_thermal}, and using \eqref{TequalD} along with Faraday's law, the electrical current drawn from the battery system is given by
\begin{equation}\label{dotQsimp}
    \dot{Q}(t) = -\frac{Dv}{\eta U(t)} = -i(t)
\end{equation}
Substituting \eqref{UofQ} into \eqref{dotQsimp} yields
\begin{equation}\label{dotQ}
    \dot{Q}(t) = -\frac{Dv}{\eta (aQ(t) + b)} = -i(t)
\end{equation}

\section{Minimum Total Energy}
The total energy of an all-electric aircraft is defined as
\begin{equation}\label{TEC}
	J(t) = \int_{t_0}^{t_f}{U(t)i(t)dt}
\end{equation}
where the times $t_0$ and $t_f$ correspond to the start and end of the cruise phase, respectively. The minimum energy problem is therefore formulated as the following optimal control problem (OCP):
\begin{equation}\label{OCP}
    \begin{aligned}
	    J^*(t) = &\min_{v(t),t_f}{\int_{t_0}^{t_f}{U(t)i(t)dt}}\\
	    &s.t.\\
	    &\dot{x}(t) = v\\
        &\dot{Q}(t) = -\frac{Dv}{\eta U(t)} = -i(t)\\
	    &D = \frac{1}{2}C_{D,0}\rho Sv^2 + \frac{2C_{D,2}W^2}{\rho Sv^2}\\
        &U(t) = aQ(t) + b > 0\\
	    &Q_{min} < Q(t) < Q_{max}\\
        &x(t_0) = x_0, \ x(t_f) = x_f \\
	    &Q(t_0) = Q_0\\
	    &v_{stall} < v < v_{max}
    \end{aligned}
\end{equation}
where $x_0$ denotes the initial cruise position of the aircraft, $Q_0$ is the initial electric charge at the start of the cruise phase, and $x_f$ is the position of the aircraft at the end of cruise.
\begin{theorem}
    Given $a$, $b$, $x_0$, $x_f$, $Q_0$, $Q_{min}$, and $Q_{max}$, the optimal airspeed is
    \begin{equation}\label{vopt_CI=0}
        v^* = \sqrt{\frac{2W}{\rho S}\sqrt{\frac{C_{D,2}}{C_{D,0}}}} = v_{D,min}
    \end{equation}
    with optimal final cruise time
    \begin{equation}
        t_f = t_0 + \frac{x_f - x_0}{v^*}
    \end{equation}
    if $v_{stall} < v^* < v_{max}$ and if $Q_0 < Q_{max}$, $Q(t_f) > Q_{min}$, where
    \begin{equation}
        Q(t_f) = \begin{cases}
            Q_0 - \frac{2W\sqrt{C_{D,0}C_{D,2}}}{\eta b}(x_f - x_0)&if\ a = 0\\
            \frac{-b + \sqrt{b^2 - 2aZ(t_f)}}{a}&if\ a > 0
        \end{cases}
    \end{equation}
    with
    \begin{equation}
        Z(t_f) = \frac{2W\sqrt{C_{D,0}C_{D,2}}}{\eta}\left(x_f - x_0\right) - \frac{aQ_0^2}{2} - bQ_0 < 0
    \end{equation}
\end{theorem}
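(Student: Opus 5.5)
The plan is to apply Pontryagin's Minimum Principle to the OCP \eqref{OCP}, after first rewriting the cost so that it no longer depends on the voltage.

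\textbf{Reducing the cost.} Since $U(t)i(t) = Dv/\eta$, the running cost is $L(v) = Dv/\eta$, which depends only on the control. The Hamiltonian is
\begin{equation*}
H = \frac{Dv}{\eta} + \lambda_x v - \lambda_Q\frac{Dv}{\eta(aQ+b)} .
\end{equation*}
The costate equations are $\dot\lambda_x = 0$ and $\dot\lambda_Q = -\lambda_Q \frac{aDv}{\eta(aQ+b)^2}$. Because $Q(t_f)$ is free (the state constraints are treated as inactive interior bounds), transversality gives $\lambda_Q(t_f)=0$. The $\lambda_Q$ equation is linear and homogeneous, so $\lambda_Q \equiv 0$. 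Thus $H = Dv/\eta + \lambda_x v$ with $\lambda_x$ constant.

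\textbf{Optimal airspeed.} Because $t_f$ is free and the system is autonomous, $H \equiv 0$ along the optimum. Minimizing $H$ over the interior of $(v_{stall},v_{max})$ gives
\begin{equation*}
H_v = \frac{D + vD_v}{\eta} + \lambda_x = 0 .
\end{equation*}
Combining this with $H=0$, i.e.\ $\lambda_x = -D/\eta$, yields $vD_v/\eta = 0$, hence $D_v = 0$ and $v^* = v_{D,min}$ by \eqref{vDmin}. This is constant, so $x_f - x_0 = v^*(t_f - t_0)$, which gives $t_f$.

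\textbf{Sufficiency.} At the stationary point $H_{vv} = (2D_v + vD_{vv})/\eta = v D_{vv}/\eta > 0$ by \eqref{Dvv}. Since $H$ is convex in $v$ along the relevant range, and $\lambda_Q\equiv0$ makes the problem effectively minimize $\int Dv/\eta\,dt = \int D\,dx/\eta$, the result follows. This last form gives a direct argument as well: $D\ge D(v_{D,min})$ pointwise, so energy is at least $D_{min}(x_f-x_0)/\eta$, with equality only at $v\equiv v_{D,min}$. I would present this as the sufficiency argument, since it bypasses subtleties of PMP with state constraints.

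\textbf{Feasibility.} At $v^*$ the drag is $D(v^*) = 2W\sqrt{C_{D,0}C_{D,2}}$, since substituting \eqref{vDmin} into \eqref{drag} makes both terms equal to $W\sqrt{C_{D,0}C_{D,2}}$. The charge equation then separates as $(aQ+b)\,dQ = -\frac{D^*}{\eta}\,dx$. Integrating gives
\begin{equation*}
\tfrac a2 Q_f^2 + bQ_f = \tfrac a2 Q_0^2 + bQ_0 - \tfrac{D^*}{\eta}(x_f-x_0) ,
\end{equation*}
that is, $\tfrac a2 Q_f^2 + bQ_f + Z = 0$.
\begin{itemize}
\item For $a=0$ this is linear and gives the stated $Q(t_f)$.
\item For $a>0$ the quadratic formula applies, and I take the root with $+\sqrt{\cdot}$, since it keeps $U = aQ+b = \sqrt{b^2-2aZ} > 0$. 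Requiring $Z<0$ ensures $Q_f>0$ and a real root.
\end{itemize}
Since $Q$ is monotonically decreasing, the constraint $Q_{min}<Q<Q_{max}$ holds on the whole interval if and only if $Q_0<Q_{max}$ and $Q(t_f)>Q_{min}$.

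The main obstacle is rigorously justifying $\lambda_Q\equiv0$ and the free-time condition $H=0$ in the presence of state constraints. I expect to handle this by noting that the constraints are inactive under the stated feasibility conditions, and by falling back on the direct convexity argument in $x$.
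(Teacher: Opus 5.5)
Your proposal is correct. It finds the candidate exactly as the paper does: running cost $Dv/\eta$, $\lambda_Q(t_f)=0$ forcing $\lambda_Q\equiv 0$, $H\equiv 0$ from free $t_f$, then $H=0$ with $H_v=0$ giving $vD_v/\eta=0$. The feasibility part is also the same (separable integration, monotone $Q$, quadratic for $Q(t_f)$), apart from integrating in $x$ rather than $t$.

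Where you differ is in the justifications:
\begin{itemize}
\item For $\lambda_Q\equiv 0$, uniqueness for the linear homogeneous costate equation is cleaner than the paper's separation of variables. The paper divides by $J^*_Q$ under the assumption $J^*_Q\neq 0$ before concluding $K_1=0$.
\item For the root, you use $aQ+b=\sqrt{b^2-2aZ}>0$. This is the model's own constraint $U>0$ and selects the branch through $Q_0$, whereas the paper appeals to Descartes' rule of signs and positivity of charge. On your branch, $Z<0$ is exactly $Q(t_f)>0$.
\item The substantive difference is sufficiency. The paper asserts that the HJB equation holds and cites Theorem 5-12 of Athans and Falb, without exhibiting a value function. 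Your bound $\int_{t_0}^{t_f} Dv\,dt \ge D(v_{D,min})\int_{t_0}^{t_f} v\,dt = 2W\sqrt{C_{D,0}C_{D,2}}\,(x_f-x_0)$ proves global optimality over every admissible control at once, because $Ui=Dv/\eta$ does not involve $Q$. It also gives a.e.\ uniqueness, since $D_{vv}>0$ makes $v_{D,min}$ the unique minimizer of $D$.
\end{itemize}

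Since the theorem's hypotheses are exactly the feasibility of $v\equiv v_{D,min}$, your bound makes the PMP step, and its state-constraint subtleties, dispensable; PMP only serves to guess the candidate. What the paper's route buys is a template for costs in which time or $Q$ enter genuinely, such as the cost-index problems it cites. There no pointwise bound of this kind is available.
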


\begin{proof}
     The Hamiltonian of the optimal control problem \eqref{OCP} is
    \begin{equation}\label{Hdocfull}
        \begin{aligned}
            H = Ui + J^*_x\dot{x} + J^*_Q\dot{Q}
        \end{aligned}
    \end{equation}
    where $J^*_x(t)$ and $J^*_Q(t)$ denote the costate variables associated with the states $x(t)$ and $Q(t)$. Substituting the state equations \eqref{dotx} and \eqref{dotQ}, along with \eqref{UofQ} into \eqref{Hdocfull} yields
    \begin{equation}\label{H}
    \begin{aligned}
        H = &\ J^*_xv + \frac{Dv}{\eta} - J^*_Q\frac{Dv}{\eta (aQ+b)}
    \end{aligned}
    \end{equation}
    The terminal constraint is
    \begin{equation*}
        \Psi(t_f) = x(t_f) - x_f = 0
    \end{equation*}
    The transversality conditions are
    \begin{equation}\label{Htf}
        H(t_f) = -\nu_t\frac{\partial\Psi}{\partial t}\bigg\vert_{t_f} = 0
    \end{equation}
    \begin{equation}\label{JQf}
        J^*_Q(t_f) = \nu_Q\frac{\partial\Psi}{\partial Q}\bigg\vert_{t_f} = 0
    \end{equation}
    where $\nu_t$ and $\nu_Q$ are Lagrange multipliers. From Hamilton's canonical equations,
    \begin{equation}\label{dotJx}
        \dot{J}^*_x(t) = - \frac{\partial H}{\partial x} = 0
    \end{equation}
    \begin{equation}\label{dotJQ}
        \dot{J}^*_Q(t) = - \frac{\partial H}{\partial Q} = -\frac{J^*_Qa Dv}{\eta(aQ + b)^2} = \frac{J^*_Qa\dot{Q}}{aQ+b}
    \end{equation}
    Equation \eqref{dotJx} implies that $J^*_x(t)$ is constant. The equation \eqref{dotJQ} can be written as a separable differential equation for $J^*_Q(t) \neq 0$:
    \begin{equation*}
        \frac{1}{J^*_Q}\frac{dJ^*_Q}{dt} = \frac{a}{ aQ+b}\frac{dQ}{dt}
    \end{equation*}
    Solving this separable equation results in
    \begin{equation*}
        J^*_Q(t) = K_1(aQ(t) + b)
    \end{equation*}
    where $K_1$ is a constant of integration. Since $(aQ(t) + b) > 0$ and $J^*_Q(t_f) = 0$, then $K_1 = 0$. Therefore, $\dot{J}^*_Q(t) = 0$ and
    \begin{equation}\label{JQ=0}
        J^*_Q(t) = J^*_Q(t_f) = 0\quad\forall t\in[t_0,t_f]
    \end{equation}
    Since the Hamiltonian does not depend explicitly on time, 
    \begin{equation}\label{dotH}
	    \dot{H} = \frac{\partial H}{\partial t} = 0
    \end{equation}
    so $H(t)$ remains constant. Using \eqref{Htf}, 
    \begin{equation}\label{Hforallt}
	    H(t) = H(t_f)=0,~\forall t\in \left[t_0, t_f\right]
    \end{equation}
    From \eqref{H} and the Pontryagin's Minimum Principle \cite{pontryagin1962}, the first-order and second-order necessary conditions for a minimum with respect to $v$ are
    \begin{equation}\label{Hv}
    \begin{aligned}
	    \frac{\partial H}{\partial v} =\ & J^*_x + \left(\frac{1}{\eta} - \frac{J^*_Q}{\eta (aQ + b)}\right)\left(D_vv+ D\right) = 0
    \end{aligned}
    \end{equation}
    \begin{equation}\label{Hvv}
    \begin{aligned}
        \frac{\partial^2H}{\partial v^2} =\ & \left(\frac{1}{\eta} - \frac{J^*_Q}{\eta (aQ + b)}\right)\left(D_{vv}v+2D_v\right)
    \end{aligned}
    \end{equation}
    Since $J^*_x(t)$ is constant, equation \eqref{Hv} yields
    \begin{equation}\label{Jx}
    \begin{aligned}
        J^*_x =\ &-\left(\frac{1}{\eta} - \frac{J^*_Q}{\eta (aQ + b)}\right)\left(D_vv+ D\right)
    \end{aligned}
    \end{equation}
    Using \eqref{JQ=0} together with \eqref{Hforallt}, equation \eqref{H} becomes
    \begin{equation}\label{H=0,JQ=0}
        H = J_x^*v + \frac{1}{\eta}Dv = 0
    \end{equation}
    Substituting \eqref{JQ=0} into \eqref{Hv} and \eqref{Hvv} gives
    \begin{equation}\label{Hv_sigma_neg}
        \frac{\partial H}{\partial v} = \frac{1}{\eta}(D_vv+D) + J_x^* = 0
    \end{equation}    \begin{equation}\label{Hvv_sigma_neg}
        \frac{\partial^2 H}{\partial v^2} = \frac{1}{\eta}(D_{vv}v + 2D_v) \geq 0
    \end{equation}
    From \eqref{Jx} with \eqref{JQ=0}, 
    \begin{equation*}
        J^*_x = -\frac{1}{\eta}(D_vv + D)
    \end{equation*}
    and substituting this expression into \eqref{H=0,JQ=0} yields
    \begin{equation}\label{HCI=0_beta=1}
        \frac{1}{\eta}D_vv^2 = 0
    \end{equation}
    Since $v > 0$ and $\eta > 0 $, equation \eqref{HCI=0_beta=1} implies $D_v = 0$. When $D_v = 0$, and since $D_{vv} > 0$, $v > 0$, and $\eta > 0$,  
    \begin{equation}\label{Huu>0}
        \frac{\partial^2 H}{\partial v^2}= \frac{1}{\eta}D_{vv}v > 0
    \end{equation}
    which implies that the Hamiltonian is strictly convex with respect to $v$. Consequently, the solution of $D_v = 0$ uniquely minimizes the Hamiltonian. Solving $D_v = 0$ yields
    the airspeed\begin{equation}\label{vopt_sigma_neg_CI_0}
        v^* = \sqrt{\frac{2W}{\rho S}\sqrt{\frac{C_{D,2}}{C_{D,0}}}} = v_{D,min}
    \end{equation}
    Since the Hamilton-Jacobi-Bellman equation \eqref{H=0,JQ=0} is satisfied, then if the constraint $v_{stall} < v^* < v_{max}$ is satisfied, the conditions of Theorem 5-12 of \cite{athans2013optimal} hold and $v^*(t)$ is the optimal airspeed.
    Finally, since $\dot{v} = 0$, the optimal airspeed is constant, which implies that the drag $D$ and its partial derivatives \eqref{Dv} and \eqref{Dvv} are also constant.
    The optimal final cruise time is given by
    \begin{equation}\label{tf_opt}
        t_f = t_0 + \frac{x_f - x_0}{v^*}
    \end{equation}
    Since $D > 0$, $v > 0$, $\eta > 0$, and $(aQ(t) + b) > 0$, it follows that $\dot{Q}(t) < 0$. Hence, $Q(t)$ is strictly decreasing on $[t_0,t_f]$. Consequently, the minimum value of $Q(t)$ is attained at $t_f$, and the maximum value at $t_0$. Thus, the constraint $Q(t) < Q_{max}$ is satisfied for all $t\in[t_0,t_f]$ if and only if $Q_0 < Q_{max}$, while the constraint $Q(t) > Q_{min}$ is satisfied for all $t\in [t_0,t_f]$ if and only if $Q(t_f) > Q_{min}$. Since equation \eqref{dotQ} is separable, this implies that
    \begin{equation}\label{dotQ_separable}
        \int{(aQ(t) + b)dQ} = \int{-\frac{Dv^*}{\eta}dt}
    \end{equation}
    Since $D$ and $v^*$ are constant over $[t_0,t_f]$, evaluating the above integrals yields
    \begin{equation}\label{quadratic_Q(t)}
        \frac{aQ(t)^2}{2} + bQ = -\frac{Dv^*}{\eta}t + K_2
    \end{equation}
    where $K_2$ is a constant of integration. Given $Q(t_0) = Q_0$, and since $D$ and $v$ are constant, then
    \begin{equation*}
        K_2 = \frac{aQ_0^2}{2} + bQ_0 + \frac{Dv^*}{\eta}t_0
    \end{equation*}
    For the special case $a = 0$, corresponding to constant supply voltage, the solution for $Q(t)$ is
    \begin{equation}\label{Q(t)_constant_voltage}
        Q(t) = Q_0 -\frac{Dv^*}{\eta b}(t-t_0)
    \end{equation}
    This function is a linearly decreasing with $t$, which means that its minimum value in the interval $[t_0,t_f]$ is at $t_f$. Therefore, for $Q(t) > Q_{min}$ to be satisfied, it is sufficient that $Q(t_f) > Q_{min}$. As a result, substituting \eqref{tf_opt} into \eqref{Q(t)_constant_voltage}, we obtain
    \begin{equation*}
        Q_0 - \frac{D}{\eta b}(x_f - x_0) > Q_{min}
    \end{equation*}
    Replacing $D$ in the above inequality with the result of using \eqref{vopt_sigma_neg_CI_0} into \eqref{drag} yields
    \begin{equation*}
        Q_0 - \frac{2W\sqrt{C_{D,0}C_{D,2}}}{\eta b}(x_f - x_0) > Q_{min}
    \end{equation*}
    For $a> 0$, \eqref{quadratic_Q(t)} is quadratic in $Q(t)$. By Descartes' rule of signs \cite{gomes_implicit}, this equation has exactly one positive real root, provided that 
    \begin{equation}\label{Q(t)>0_cond}
        \frac{Dv^*}{\eta}(t-t_0) - \frac{aQ_0^2}{2} - bQ_0 < 0
    \end{equation}
    Define
    \begin{equation}\label{Z(t)}
        Z(t) = \frac{Dv^*}{\eta}(t-t_0) - \frac{aQ_0^2}{2} - bQ_0
    \end{equation}
    Equation \eqref{Z(t)} is a linearly increasing function of $t$, and thus attains its maximum value over $[t_0,t_f]$ at $t_f$. Therefore, if \eqref{Q(t)>0_cond} holds at $t_f$, then it holds for all $t\in[t_0,t_f]$. Evaluating \eqref{Z(t)} at $t_f$ and using \eqref{tf_opt} yields
    \begin{equation}\label{Q(t_f)>0_cond}
        Z(t_f) = \frac{D}{\eta}\left(x_f - x_0\right) - \frac{aQ_0^2}{2} - bQ_0 < 0
    \end{equation}
    Substituting \eqref{vopt_sigma_neg_CI_0} into  \eqref{drag}, and using \eqref{vopt_sigma_neg_CI_0} and \eqref{Q(t_f)>0_cond} yields leads to
    \begin{equation}\label{Z(t_f)}
        Z(t_f) = \frac{2W\sqrt{C_{D,0}C_{D,2}}}{\eta}\left(x_f - x_0\right) - \frac{aQ_0^2}{2} - bQ_0 < 0
    \end{equation}
    Solving \eqref{quadratic_Q(t)} for $Q(t)$ yields the roots
    \begin{equation}\label{Q(t)_pm}
        Q(t) = \frac{-b \pm \sqrt{b^2 - 2aZ(t)}}{a}
    \end{equation}
    If \eqref{Q(t_f)>0_cond} holds, then the discriminant of \eqref{Q(t)_pm} is positive and exceeds $b^2$. One solution of \eqref{Q(t)_pm} is positive and the other is negative.
    Since the battery charge must remain positive and provided that \eqref{Z(t_f)} is satisfied, the admissible solution corresponds to the positive root
    \begin{equation*}
        Q(t) = \frac{-b + \sqrt{b^2 - 2aZ(t)}}{a}
    \end{equation*}
    Since $Q(t)$ is a strictly decreasing function over the interval $[t_0,t_f]$, to satisfy the constraint $Q(t) > Q_{min},\ \forall t\in[t_0,t_f]$, the following must hold
    \begin{equation*}
        Q(t_f) > Q_{min}
    \end{equation*}
    Evaluating $Q(t)$ at $t_f$ yields
    \begin{equation*}
        Q(t_f) = \frac{-b + \sqrt{b^2 - 2aZ(t_f)}}{a}
    \end{equation*}
\end{proof}
\begin{remark}
    The optimal cruise airspeed given by \eqref{vopt_CI=0} in Theorem 1 coincides with the optimal cruise airspeed obtained in \cite{kaptsov2017flight} when the cost of time is equal to zero. Furthermore, when $a = 0$, the supply voltage becomes independent of the battery charge, and the model for $\dot{Q}$ reduces to that used in \cite{kaptsov2017flight}. In this case, the expression for $Q(t_f)$ derived in Theorem 1 corresponds to the solution that would be obtained from the model in \cite{kaptsov2017flight} when the time cost is set to zero. This agreement with the results of \cite{kaptsov2017flight} provides a validation of Theorem 1.
\end{remark}
\section{Simulation Results}\label{simulations}
This section is based on the BETA Technologies ALIA CX300 all-electric aircraft \cite{beta_aircraft}. The battery system is assumed to use Amprius Silicon batteries \cite{stefan2023amprius}, which have a reported energy density of 500 Watt-hours per kilogram. The aircraft and battery parameters are listed in Table \ref{tab:aircraft_params}. The environmental parameters and boundary conditions are summarized in Table \ref{tab:env_params}, which corresponds to a regional advanced air mobility scenario for a regional flight between Montreal, Canada and Ottawa, Canada. 

\begin{table}
    \centering
    \caption{Aircraft Parameters \cite{beta_aircraft}\cite{stefan2023amprius}\cite{streeter2025electric_aviation_beta}\cite{raymer2018aircraft}}
    \begin{tabular}{ll}\hline
        \textbf{Parameter} &\textbf{Value}  \\\hline
        Wing surface area $S$ &30 m$^2$\\
        Profile drag coefficient $C_{D,0}$&0.02\\
        Induced drag coefficient $C_{D,2}$&0.05\\
        Electrical system efficiency $\eta$&0.85\\
        Voltage model coefficient $a$&0.00028 V/C\\
        Voltage model coefficient $b$&682 V\\
        Battery system weight&5,000 N\\
        Maximum charge of batteries $Q_{full}$&979,200 C\\
        Maximum lift coefficient $C_{L,max}$&1.8\\
        Maximum take-off weight (MTOW)&28,675 N\\
        Weight without payload&22,400 N\\
        Maximum rated speed $v_{max,rtd}$&78.6 m/s\\
        Drag divergence speed $v_{dir}$&205.8 m/s\\\hline 
    \end{tabular}
    
    \label{tab:aircraft_params}
\end{table}

\begin{table}
    \centering
    \caption{Environmental Parameters and Boundary Conditions}
    \begin{tabular}{ll}\hline
        \textbf{Parameter} &\textbf{Value}  \\\hline
        Cruising altitude&1,500 m\\
        Air density $\rho$&1.058 kg/m$^3$\\
        Stall speed (at MTOW) $v_{stall}$&31.7 m/s\\
        Initial cruise time $t_0$&0 s\\
        Initial horizontal position $x_0$& 0 m\\
        Initial charge $Q_0$&700,000 C\\
        Aircraft weight $W$& 28,000 N\\
        Final horizontal position $x_f$&150,000 m\\
        Charge lower bound $Q_{min}$&196,000 C\\
        Charge upper bound $Q_{max}$&781,000 C\\
        Airspeed upper bound $v_{max}$&78.6 m/s\\\hline
    \end{tabular}
    
    \label{tab:env_params}
\end{table}
\subsection{Influence of Cruising Altitude and Aircraft Weight on Optimal Cruise Airspeed}
The aircraft weight $W$ is varied over 7 equally spaced values in the interval $[22500,28500]$ Newtons to study its influence on the optimal cruise airspeed $v^*$ as a function of cruising altitude $h$. The analysis considers cruising altitudes ranging from 1000 meters to 4000 meters, representing typical operating altitudes for regional advanced air mobility missions. The air density $\rho$ for each cruising altitude is calculated using the NASA tropospheric model \cite{nasaatmosphere},
\begin{equation}\label{airdensity}
    \rho(h) = \frac{101.29\left(288.14- 0.00649h\right)^{4.256}}{0.2869(288.08)^{5.256}}
\end{equation}
 For each pair $(W,\rho)$, the optimal cruise airspeed is computed using equation \eqref{vopt_CI=0}. Other parameters of the aircraft and of the environment are listed in Tables \ref{tab:aircraft_params} and \ref{tab:env_params}.
\begin{figure}[t]
    \centering
    \includegraphics[scale=0.526]{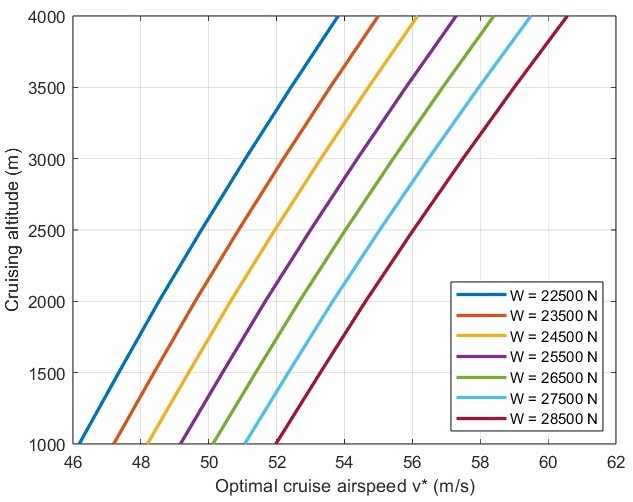}
    \caption{Optimal cruise airspeed $v^*$ as a function of cruising altitude for different values of the aircraft weight $W$.}
    \label{fig:altitude_weight_vs_airspeed}
\end{figure}

Figure \ref{fig:altitude_weight_vs_airspeed} shows that the optimal cruise airspeed $v^*$ increases monotonically with the cruising altitude for all considered values of $W$. This behavior follows directly from the expression for $v^*$. As the aircraft flies higher, the atmospheric air density $\rho$ decreases, and a higher airspeed is therefore required to generate the lift needed to balance weight during steady flight. The figure also illustrates the influence of aircraft weight on $v^*$. For a given cruising altitude, larger values of $W$ result in larger values of $v^*$. This is consistent with the fact that heavier aircraft must operate at higher airspeeds in order to generate the lift required to sustain steady flight. 

These results indicate that energy-optimal cruise operation requires adjusting the cruise airspeed according to both the aircraft weight and the cruising altitude. Since low-altitude AAM missions may be conducted across multiple altitudes and with varying payloads, accounting for these factors when selecting cruise speeds can improve energy efficiency and help ensure that the AAM missions with electric aircraft remain feasible.

\subsection{Influence of Aircraft Weight and Initial Charge on the Minimum Feasible Electrical System Efficiency}
The analysis considers aircraft weights between 22,500 Newtons and 28,500 Newtons to study the influence of $W$ on the minimum electrical system efficiency $\eta$ required to satisfy the feasibility condition $Q(t_f) > Q_{min}$. Several representative values of the initial onboard charge $Q_0$ are considered in order to evaluate the impact of the available battery energy on this feasibility condition. For each pair $(W,Q_0)$, the minimum $\eta$ is computed from the condition $Q(t_f) = Q_{min}$, which represents the boundary between feasible and infeasible all-electric cruise operations. 
\begin{figure}
    \centering
    \includegraphics[scale=0.53]{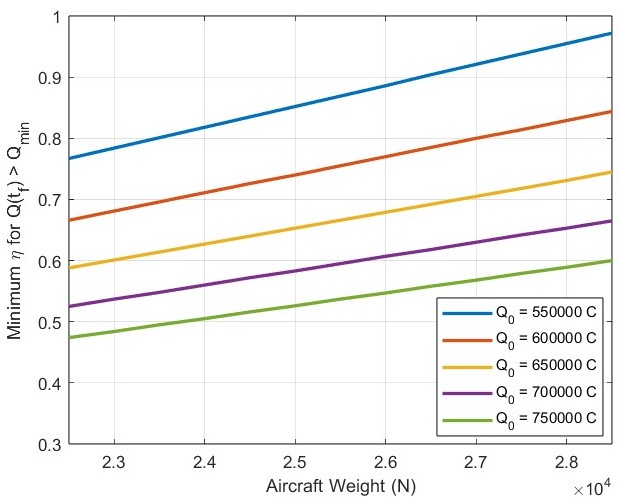}
    \caption{Minimum electrical system efficiency $\eta$ required to satisfy $Q(t_f) > Q_{min}$ as a function of the aircraft weight $W$ for various values of $Q_0$.}
    \label{fig:mineta_vs_weight}
\end{figure}

Figure \ref{fig:mineta_vs_weight} shows that the minimum electrical system efficiency $\eta$ required to satisfy $Q(t_f) > Q_{min}$ increases monotonically with the weight of the aircraft for all values of $Q_0$ considered. This implies that heavier aircraft require higher electrical system efficiencies to ensure that sufficient charge remains available at the end of cruise. This is consistent with the fact that heavier aircraft require greater energy expenditure to sustain flight. The figure also shows that, for a given aircraft weight, the less efficient the electrical system needs to be for larger values of $Q_0$ so that the aircraft can complete its cruise segment. 

From an operational perspective, these results highlight the importance of aircraft weight and available charge in determining the feasibility of all-electric cruise operations. This consideration is especially relevant for AAM missions, where aircraft weight vary significantly due to changes in passenger load or cargo, and where frequent operations may result in varying initial battery charge levels. In such scenarios, heavier payloads require more efficient electrical propulsion systems to maintain feasible all-electric cruise operations, while a larger initial onboard charge relaxes these efficiency requirements.

\section{Conclusions}
This paper investigated the total energy problem for an all-electric aircraft in steady cruise flight. An optimal control formulation is developed in which the battery supply voltage is modeled as an affine function of the battery charge. Closed-form expressions were derived for the optimal cruise airspeed and optimal final time. The analysis showed that the optimal airspeed is the minimum-drag airspeed and remains constant over the cruise segment. Additional analytical conditions were obtained to determine the feasibility of all-electric operations based on the available battery charge.

Numerical simulations based on the BETA Technologies CX-300 all-electric aircraft and a representative advanced air mobility mission between Montreal and Ottawa illustrated the practical implications of the analytical results. The simulations show that the optimal cruise airspeed increases with both cruising altitude and aircraft weight, while the minimum electrical system efficiency required for the aircraft to maintain battery charge above a given threshold increases with aircraft weight and decreases with initial charge. The analysis suggests that energy-optimal cruise operation requires the adjustment of the optimal cruise airspeed according to the aircraft weight and the cruising altitude. It also indicates that the feasibility of sustained all-electric cruise depends on both aircraft weight and available onboard charge. These findings provide valuable insights for AAM mission planning and operational decision-making, where variations in payload, cruising altitude, and available onboard battery can significantly impact the feasibility and energy efficiency of all-electric aircraft operations.

\section*{Acknowledgment}
The authors would like to acknowledge Concordia University for providing the necessary infrastructure which facilitated the successful completion of this work.

\section*{Conflict of Interest}
On behalf of all authors, the corresponding author states that there is no conflict of interest.

\section*{Data Availability}
Data used for Figure \ref{fig:battery_model} of the manuscript comes from the Oxford Battery Degradation Dataset found in \cite{howey2017oxford}. All other data underlying the result are available as part of the article.

\section*{Funding}
The authors received no specific funding for this work.

\section*{Author Contributions}
Steven Li wrote the manuscript. Luis Rodrigues supervised the research, contributed to the conception and design of the method, and reviewed and edited the manuscript.
\bibliography{References}

\end{document}